\definecolor{darkred}{rgb}{0.0,0.0,2}
\newtheorem{proposition}{Proposition}
\newtheorem{theorem}{Theorem}
\newtheorem{lemma}{Lemma}
\newtheorem{remark}{Remark}
{\rm}
\def\eps{\varepsilon}
\author{A. Piatnitski$^{\small\bf a,b}$,
A. Shamaev$^{\small\bf c,d}$\thanks{This work was carried out in the framework of the agreement AAAA-A20-120011690138-6 of Institute of Problem in Mechanics
and partially supported by the Moscow Center for Fundamental and Applied Mathematics, the Ministry  of  Science of Higher education grant  № 075-15-2019-1621}\ ,
E. Zhizhina$^{\small\bf a}$ \\
\\
{
\small $^{\small\bf a}$Institute for Information Transmission Problems of RAS,}\\[-1.5mm]
{\small 19, Bolshoi Karetnyi per. build.1, 127051 Moscow, Russia}\\
{\small $^{\small\bf b}$The Arctic University of Norway, Campus Narvik,}\\[-1.5mm]
{\small P.O.Box 385, Narvik 8505, Norway}\\
{\small $^{\small\bf c}$ Lomonosov Moscow State University,}\\[-1.5mm]
{\small  GSP-1, Leninskie Gory, 119991 Moscow, Russia}\\
{\small $^{\small\bf d}$ Ishlinsky Institute for Problems in Mechanics of RAS}\\[-1.5mm]
{\small Prospekt Vernadskogo 101-1, 119526 Moscow, Russia}\\
$ $}
\def\eps{\varepsilon}
\begin{document}

\title{Mathematical multi-scale model of water purification}


\maketitle

\noindent{\sl Keywords:}\  upscaling, high-contrast periodic media, correctors, absorption, random walk.

\bigskip

\begin{abstract}
In this work  we consider a mathematical model of the water treatment process and determine the
effective characteristics of this model.
At the microscopic length scale we describe our model in terms of a lattice random walk in a high-contrast periodic medium with absorption.  Applying then the upscaling procedure we obtain the macroscopic model for total mass evolution.
We discuss both the dynamic and the stationary regimes, and show how the efficiency of the purification process depends on the characteristics of the macroscopic model.
\end{abstract}

\noindent

\section{Introduction}

The problem of water purification has great practical importance and gives rise to many interesting mathematical questions.  Mathematical modelling of water treatment has become increasingly popular in recent years, see e.g. \cite{Ref1, Ref2, Ref3}.
In the present work we deal with  mathematical models for the treatment of  wastewater in biofilm reactors  and in filters filled with granules which are made of  nano-porous super-hydrophilic materials.
We use here a combination of a probabilistic approach and a homogenization technique for modelling the purification process.

To clarify the motivation of the model we shortly describe one of the biofilters used in industrial water purification process. A biofilm reactor is a tank of cylindrical shape which is about one meter high and of diameter about 20 cm. It is packed with parallelepipeds consisting of thin pressed polymer fibers. A typical volume of such a parallelepiped is 15-20 cm$^3$, and the fibers are small rods whose length is about 1 cm. Each such a rod is covered with a thin biologically active biofilm, these biofilms being filled with bacteria for which impurities within water are a nutrition.
Water is supplied to the upper cross section of the device and then trickles down drop by drop along the rods so that the biofilms
covering the rods are getting wet.  The material of biofilms is designed in such a way that its diffusion coefficient  is much smaller then that
in the surrounding fluid domain.  The polluted water penetrates the biofilms and
the impurities are consumed by the bacteria. The intensity of this process depends on the concentration of both the bacteria and  the impurities at the biofilms boundary.
The averaged speed of water also influences the said intensity. 
The biofilter is efficient if the said averaged speed is sufficiently small.
All together there are several millions such rods in the device, they are called basic elements of the biofilter. We would like to construct an adequate model of the mentioned above
consumption process for one rod and then to model the whole process of water purification.  Our goal is to evaluate the drop in the water pollution level.  Several models of this type have been considered in a number of works, in particular in \cite{[10],[14]}.

In \cite{[14]} the consumption of impurities in one basic element is described by  a system of differential equations including a diffusion equation
in 3D  cylindrical domain and a transport equation at the rod border. This problem does not have an analytic solution. So it is natural to simulate
its solution numerically. To this end in \cite{[14]} the whole cylindrical tank is divided into layers in the vertical direction and calculate numerically
the drop of the impurity concentration at each layer.
Also, in \cite{[14]} the asymptotic analysis of the system is performed provided a small thickness of the rods.

We turn to another model of water treatment.
 One of the most common pollutants of the wastewater are petrol and oils impurities, and the water purification from oil and petrol products refer to the highly important environmental problems. One of the modern methods of water purification is described in \cite{KK}. It is the filtering method using granules made of innovative nano-porous super-hydrophilic materials.
For the practical implementation a design of the classical pressure filter for granulated filter bed was chosen. The system is represented by a vertical cylindrical filter with a distribution system below and above. The filter is filled with granulated bed of the grade 0.7 - 1.7 mm, preliminary impregnated with water. The filter has a height of 1.5 m and an average pore diameter of 6.5 nm.  The diffusion coefficient in the granules 
is much smaller than that in the surrounding solute.

In the present work we suggest a mathematical model based on a probabilistic interpretation of the water purification process described in \cite{KK}.
It is assumed that the movement of the impurities at the micro scale is described in terms of a Markov process. Namely, the  impurities can enter the porous granules with a positive probability and then either be absorbed there
or leave.

Also, it is  assumed that the basic purification elements are located periodically.
   We then divide each periodicity cell into a finite number of cubes,
   introduce the lattice formed by the centers of these cubes and
   perform the corresponding discretisation of  the Markov process.
   For the obtained random walk we define the transition probabilities
   between the sites of the same cell or neighbouring cells. This yields the description
   of the model at the microscopic level.

Our goal is to provide the macroscopic description of this process
based on upscaling  procedure. It will be shown that the
coefficients involved in the macroscopic model, i.e. the
effective characteristics of the water purification process,
 can be expressed  through  the characteristics of the model at the microscopic scale
 by means of solving a system of linear algebraic equations. The size of the system only depends
 on the number of points in the period.

The main characteristics of the quality of water purification is the exponent
that specifies the rate of decay of impurities concentration depending on the distance
to the upper cross section of the filter. In this work we provide some examples of
calculating such exponent.

The advantage of the proposed model is that it can be easily adapted to any geometry of absorbing films.
The model considered in this work can be used for  better understanding  complex treatment systems as well as for optimising the parameters of water purification devices in accordance with
the restrictions on the device productivity and the purification quality.

Let us formulate the mathematical problem underlying the considered model.
In Sect. 2 we introduce the random walk $\widehat X(n)$ on $\mathbb Z^d$, $d\geq 1$ in a periodic high-contrast
medium that models the process of water purification on a microscopic scale. We consider the random walk as a discretization of the diffusion of impurities within a device with appropriate restrictions on their movement.
In the microscopic model considered
in this work, in addition to the random walk there is also a partial absorption at the astral sites.
To describe the absorption process we modify the random walk model adding the absorbing state $\{ \star \}$. Thus, our model at the micro level is the random walk with absorption, we call this process $\mathcal{X}(t)$.
In Sect. 3 we study the large time behaviour of this process under the upscaling procedure.
To do this we assume that the transition probabilities of the random walk depend on a small parameter $\varepsilon>0$, and study the limit behaviour of the rescaled process
$\mathcal{X}_\varepsilon(t)$, as $\varepsilon\to 0$.
It turns out that there exists a nice and very useful description of the limit process as a two component continuous time Markov process $\mathfrak{X}(t) = (\mathcal{X}(t), k(t))$. Its  first component ${\mathcal{X}}(t)$  evolves in the space $\mathbb R^d \cup \{ \star \}$, while the second component is a jump Markov process $k(t)$ with a finite number of states.

In Sect. 4 we provide an example of the macroscopic (effective) model, both in dynamic and in stationary regimes. In Sect. 5 one can find the derivation of formulae for the effective characteristics of the macroscopic model. In Sect. 6 we calculate the effective matrix and the effective drift for one example, and show the connection between models at micro and macro scales.

The mathematical background of the present work has been  developed in \cite{PZh}, where we studied a symmetric random walk in high-contrast medium and constructed the limit process on the extended state space. In this work, we supplement the symmetric random walk with an additional drift and absorption, which leads to a significant modification of our previous scheme. A crucial step in our approach is constructing several periodic correctors which are introduced as solutions of auxiliary difference elliptic equations on the period. Earlier the corrector techniques in the discrete framework have been developed in \cite{Ko86} for proving the homogenization results for uniformly elliptic difference schemes.

\medskip




Various phenomena in media with a high-contrast microstructure have been widely studied by the specialists
in applied sciences and then since '90th high-contrast homogenization problems have been attracting the attention of mathematicians.
Homogenization problems for partial differential equations describing high-contrast periodic media have
been intensively investigated in the existing mathematical literature. In the pioneer work \cite{ADH90} a parabolic equation with high-contrast periodic coefficients has been considered. It was shown that the effective equation contains a non-local in time term which represents the memory effect. In the literature on porous media these models are usually called
double porosity models. Later on in \cite{Ale92}, with the help of two-scale convergence techniques, it was proved that the solutions of the original parabolic equations two-scale converge to a function which depends both on slow and fast variables.



\section{Microscale description. High-contrast discrete models.}


In this section we provide a detailed description of the random walk.  Given a probability space $(\Omega,\mathcal{F}.\mathbf{P})$,
we consider a random walk $\widehat X(n)$ in $\mathbb Z^d, \ d \ge 1$, with the transition probabilities
$p(x,y)=\Pr (x \to y)$,  $(x,y) \in \mathbb Z^d \times  \mathbb Z^d$:
\begin{equation}\label{p}
\sum_{y \in \mathbb Z^d} p(x,y)=1 \quad  \forall x \in \mathbb Z^d.
\end{equation}
Denote the transition matrix of the random walk by $P =\{ p(x,y), \ x,y \in \mathbb Z^d \}$.
We assume that the random walk satisfies the following properties:
\begin{itemize}
\item [-] {\it Periodicity}. The functions $p(x,x+\xi)$ are periodic in $x$  with a period $Y$ for all $\xi\in\mathbb Z^d$. In what follows we identify the period $Y$ with the corresponding $d$-dimensional discrete torus $\mathbb T^d$.
\item [-] {\it Finite range of interactions}. There exists $c>0$ such that
\begin{equation}\label{c1}
p(x,x+\xi)=0, \quad \hbox{if }|\xi|>c.
\end{equation}
\item [-] {\it Irreducibility.}  The random walk is irreducible in $\mathbb Z^d$.
\end{itemize}

\medskip
In this paper we consider a family of transition probabilities $p^{(\varepsilon)}(x,y)$ that satisfy all above properties and depend on a small parameter $\varepsilon>0$.
The transition probabilities $p^{(\varepsilon)}(x,y)$  describe  the so-called high-contrast periodic structure of the environment.  We suppose that the transition matrix $P^{(\varepsilon)}$ is a small perturbation
of a fixed transition matrix $P^0 = \{ p_0 (x,y) \}$ that corresponds to a symmetric random walk, i.e.
\begin{equation}\label{p0-sym}
p_0(x,y) = p_0(y,x), \quad (x,y) \in \mathbb Z^d \times  \mathbb Z^d, \quad \sum_{y \in \mathbb Z^d} p_0(x,y)=1 \quad  \forall x \in \mathbb Z^d.
\end{equation}
We say that $y \sim x, \; x,y \in \mathbb Z^d$, if $p_0 (x,y) \neq 0$. Let $\Lambda_x$ be a finite set  of $\xi\in \mathbb Z^d$ such that $x+\xi \sim x$.
We will use further the notation
\begin{equation}\label{p0-sym-bis}
p_0 (x,y) = p_0 (x, x+ \xi) = p_\xi(x) \; \mbox{ for all } \; x \sim y, \; x,y \in \mathbb Z^d, \; \mbox{ with } \;  y =x+ \xi.
\end{equation}
Thus the normalization condition can be rewritten as
$$
\sum_{\xi \in \Lambda_x} p_\xi (x)=1.
$$

The transition matrix $P^{(\varepsilon)}$ has the following form
\begin{equation}\label{PV}
P^{(\varepsilon)} = P^0 + \varepsilon D + \varepsilon^2 V.
\end{equation}
In order to characterize the matrices $P^0$, $D$ and $V$ we divide the periodicity cell into
two sets
\begin{equation}\label{defAB}
\mathbb T^d = A \cup B; \quad A,\, B \neq \emptyset, \; A \cap B =   \emptyset,
\end{equation}
and assume that $B \subset \mathbb T^d $ is a connected set such that its periodic extension denoted $B^{\sharp}$
is unbounded and connected. Here the connectedness is understood in terms of the transition matrix $P^0$. Two points $x',\,x''\in\mathbb Z^d$ are called connected if there exists a path $x^1,\ldots,x^L$ in
 $\mathbb Z^d$ such that $x^1=x'$, $x^L=x''$ and $p_0(x^j,x^{j+1})>0$ for all $j=1,\ldots, L-1$.
As a consequence we get that
\begin{equation}\label{irrB}
P^0 \; \mbox{is irreducible on } \; B^\sharp.
\end{equation}
We also denote by $A^\sharp$ the periodic extension of $A$. Then
$\mathbb Z^d = A^{\sharp} \cup B^{\sharp}$.

\bigskip
In addition to the general assumptions (\ref{p0-sym}) and \eqref{irrB} the following conditions on  the matrices $P^0$, $D$ and $V$ are imposed:
\begin{itemize}
\item[\bf --] $p_0(x,x) = 1$, if $x \in A^\sharp$;
%
\item[\bf --] $p_0(x,y) = 0$, if $x, y \in A^\sharp, \ x \neq y$;
\item[\bf --] $p_0(x,y) = 0$, if $x \in B^\sharp, \ y \in A^\sharp$;
\item[\bf --]   $d(x,y)=0$, if at least one of $x $ or $ y \in A^\sharp$;
\item[\bf --]   $v(x,y)=0$, if $x,\,y\in B^\sharp$, $x\not= y$;
\item[{\bf --}]  the elements of matrices $D$ and $V$ satisfy the relation
\begin{equation}\label{v0}
\sum_{y \in \mathbb Z^d} d (x,y) = 0, \quad
\sum_{y \in \mathbb Z^d} v (x,y) = 0 \quad \forall x \in \mathbb Z^d.
\end{equation}
\end{itemize}
\begin{remark}
In particular, the above conditions imply that $v(x,y) \ge 0$, if  at least one of $x $ or $ y \in A^\sharp$ and $x \neq y$.
\end{remark}
From the periodicity of $D$ and $V$ it also follows that
$$
d_{max}:=\max_{x,y \in \mathbb Z^d} |d(x,y)| <\infty, \quad v_{max}:=\max_{x,y \in \mathbb Z^d} |v(x,y)| <\infty.
$$
Summarizing all above conditions, we conclude that the non-zero transition probabilities defined by \eqref{PV} have the following structure:
\begin{itemize}
\item[\bf --] $p(x,y) = p_0(x,y) + O(\eps)$ with $p_0(x,y) \asymp 1$, when $x,y \in B^\sharp$ (rapid movement);
\item[\bf --] $p(x,x)=  1+O(\eps^2)$, when $x\in A^\sharp$ (slow movement);
\item[\bf --] $p(x,y) \asymp\varepsilon^2$, when $x, y \in A^\sharp, \ x \neq y$ (slow movement);
\item[\bf --] $p(x,y) \asymp\varepsilon^2$, when $x \in B^\sharp, \ y \in A^\sharp$ (rare exchange between $A^\sharp$ and $B^\sharp$).
\end{itemize}
The above choice of the transition probabilities reflects a
slow drift (of the order $\eps$) given by matrix $D$ in the fast component, and also a significant slowdown (of the order $\eps^2$) of the random walk inside the slow component.
\medskip

Further, we add to the above random walk an absorption process consistent with the structure of the periodic environment, assuming that the absorption occurs only inside the inclusions $A^\sharp$. For the description of the complete process we will denote by  $ \mathbb{S} = \mathbb{Z}^d \cup \{ \star \}$ the state space of the new process, where $\{\star \}$ is the absorption state. Then the transition matrix of the complete process with absorption has the following form
\begin{equation}\label{PVW}
Q^{(\varepsilon)} = P^{(\varepsilon)} + \varepsilon^2 W = \big( P^0 + \varepsilon D + \varepsilon^2 V \big) + \varepsilon^2 W,
\end{equation}
where $Q^{(\varepsilon)}(\star, \star) = 1, \; W(x, \star) = m>0$ and $W(x,x) = -m$ for all $x \in A^\sharp$, otherwise $W(x,y)=0$.

Let $l_0^{\infty} (\mathbb Z^d)$ be the Banach space of bounded functions on $\mathbb Z^d$ vanishing at infinity with the norm $\|f \| = \sup_{x \in \mathbb Z^d} |f(x)|$. Similarly, we consider the Banach space of bounded functions $l_0^{\infty} (\mathbb{S}) = l_0^{\infty} (\mathbb Z^d) \oplus \mathbb{C} $.

We note that random walks with transition probabilities of the form 
\begin{equation}
P^{(\varepsilon)} = P^0 +  \varepsilon^2 V
\end{equation}
has
have  been studied in  \cite{PZh}. In the present  work we supplement the model with drift and absorption. Our goal is to derive the effective evolution equation under the diffusive scaling.

\section{Upscaling}

\subsection{Rescaled process}

In what follows we study the scaling limit of the random walk on $\mathbb{S}$ with transition matrix $Q^{(\varepsilon)}$ and use $\varepsilon$  as the  scaling factor.
Denote $\varepsilon \mathbb Z^d = \{z: \frac{z}{\varepsilon} \in \mathbb Z^d \}$, then $\varepsilon \mathbb Z^d = \varepsilon A^{\sharp} \cup \varepsilon B^{\sharp}$, and let $\varepsilon \mathbb{S} = \varepsilon \mathbb{Z}^d \cup \{ \star \}$. In what follows the symbols $x$ and $y$
are used for the variables on $\mathbb Z^d$ (fast variables), while the symbols $z$ and $w$ for the variables on $\eps \mathbb Z^d$ (slow variables). Notice that the state $\{ \star \}$ does not change under the scaling.

We introduce now the rescaled process.
Denote by $T_{\varepsilon}$ the transition operator associated with the transition matrix \eqref{PVW}:
\begin{equation}\label{Peps}
T_{\varepsilon} f(z) = \sum_{w \in \varepsilon \mathbb{S}} q_\varepsilon (z,w) f(w) =  \sum_{w \in \varepsilon \mathbb Z^d} q_\varepsilon (z,w) f(w) + q_\varepsilon (z, \star) f(\star), \quad f \in  l_0^{\infty} (\varepsilon \mathbb{S}),
\end{equation}
where $ q_\varepsilon (z,w)$  are elements of the matrix $ Q^{(\varepsilon)}$, see \eqref{PVW}. Namely, $q_\varepsilon (z,w) = p_\varepsilon (z,w) = p(\frac{z}{\varepsilon}, \frac{w}{\varepsilon})$, when $z,w  \in \varepsilon \mathbb Z^d $, where $p_\varepsilon(z, w)$ are elements of the matrix $P^{(\varepsilon)}$ defined in \eqref{PVW};
$ q_\varepsilon (z, \star) = \varepsilon^2 m$, if $z \in  \varepsilon A^{\sharp}$ and $ q_\varepsilon (z, \star) = 0$, if $z \in  \varepsilon B^{\sharp}$. Then the operator
\begin{equation}\label{L_e}
L_\varepsilon \ = \ \frac{1}{\varepsilon^2} (T_{\varepsilon}-I)
\end{equation}
is the difference generator of the rescaled process ${\mathcal{X}}_\varepsilon (t)$
on $\varepsilon \mathbb{S} = \varepsilon \mathbb{Z}^d \cup \{ \star \}$  with transition operator $T_\varepsilon$. The rescaled process has two components:
\begin{equation}\label{2comp}
 {\mathcal{X}}_\varepsilon (t) = \{  \widehat X_\varepsilon(t), \ \widehat s(t)   \},
\end{equation}
where
$ \widehat X_\varepsilon(t) = \varepsilon \widehat X(\left[\frac{t}{\varepsilon^2}\right])$ is the rescaled random walk on
$\varepsilon \mathbb Z^d $, and the latter component $\widehat s(t)$ lives on $\{ \star \}$.

The goal of the paper is to describe the limit behavior of the rescaled process ${\mathcal{X}}_\varepsilon (t)$, as $\eps\to 0$, to construct the limit process, and to find the explicit expressions for all effective characteristics of the limit process.

\subsection{Extended random walk}

{
Homogenization of non-stationary processes in high contrast environments often results in the effective equations with nonlocal in time terms
representing the memory effect. 
As was shown in  \cite{PZh} the limit process for a random walk in a high contrast environment remains Markov   if we equip the original random walk with additional component(s) and consider the obtained random walk in the extended state space.

In this subsection we describe the constructions of an extended random walk introduced in  \cite{PZh}.}
We equip the random walk  $\widehat X_\varepsilon (t)$ (the first component in \eqref{2comp}) with an additional component(s) in the same way as it has been done in \cite{PZh}.
  Assume that the set $A$ defined in \eqref{defAB} contains $M \in \mathbb N$ sites of $\mathbb T^d$: $A = \{ x_1, \ldots, x_M \}$.
For each $k=1, \ldots, M$ we denote by $\{x_k \}^{\sharp}$ the periodic extension of the point $x_k \in A$, then
\begin{equation}\label{decZ}
\varepsilon \mathbb Z^d  = \varepsilon B^{\sharp} \cup \varepsilon A^{\sharp} = \varepsilon B^{\sharp} \cup \varepsilon \{ x_1 \}^{\sharp} \cup \ldots  \cup \varepsilon \{ x_M \}^{\sharp}.
\end{equation}
We assign to each $z \in \varepsilon \mathbb Z^d$ the index $k(z) \in \{0,1, \ldots, M\}$ depending on the component in decomposition (\ref{decZ}) to which $z$ belongs:
\begin{equation}\label{kx}
k(z) \ = \ \left\{
\begin{array}{l}
0, \; \mbox{ if } \;  z \in \varepsilon B^{\sharp}; \\
j, \;  \mbox{ if } \;  z \in \varepsilon \{ x_j \}^{\sharp}, \; j= 1, \ldots, M.
\end{array}
\right.
\end{equation}
With this construction in hands we introduce the metric space
\begin{equation}\label{Ee}
E_\varepsilon\ = \  \left\{ (z, k(z)), \; z \in \varepsilon \mathbb Z^d,\; k(z) \in \{0,1,\ldots, M \} \right\}, \quad E_\varepsilon \subset \varepsilon \mathbb Z^d \times \{0,1,\ldots, M \}
 \end{equation}
 with a metric that coincides with the metric in  $\varepsilon \mathbb Z^d$ for the first component of $(z,k(z)) \in E_\varepsilon$.

The  index  $k(\star) = \star$ is assigned to the state $z =\{ \star\}$. Thus the extended version of the absorption state is $\{ \star, \star \}$, but for simplicity we will keep the notation $\{\star \}$. Denote by $\mathbb{S}_{E_\varepsilon} = E_\varepsilon \cup \{ \star \}$, and
in what follows instead of $\mathcal{X}_\varepsilon (t)$ we will consider the process
$$
{\mathfrak{X}}_\varepsilon (t) = \{ \mathcal{X}_\varepsilon (t), k(t) \},
\quad k(t) \in  \{ 0,1, \ldots, M, \star \}.
$$

We denote the space of bounded functions on $\mathbb{S}_{E_\varepsilon}$ by ${\cal B}(\mathbb{S}_{E_\varepsilon})$ and construct the transition operator $T_\varepsilon$ of the process ${\mathfrak{X}}_\varepsilon (t)$ on $\mathbb{S}_{E_\varepsilon}$ using the same transition probabilities as in operator (\ref{Peps}):
\begin{equation}\label{Te}
\begin{array}{l}
\displaystyle
(T_{\varepsilon} f) (z, k(z)) = \sum_{w \in \varepsilon \mathbb{S}} q_\varepsilon (z,w) f(w, k(w)) =  \sum_{w \in \varepsilon \mathbb Z^d} q_\varepsilon (z,w) f(w, k(w)) + q_\varepsilon (z, \star) f(\star), \\[6mm]
\displaystyle
(T_{\varepsilon} f) (\star) =  f(\star), \qquad f \in  {\cal B} (\mathbb{S}_{E_\varepsilon})
\end{array}
\end{equation}
with $\{ q_\varepsilon (z, w) \} = Q^{(\varepsilon)}$.

Then $T_\varepsilon$ is the contraction on  ${\cal B}(\mathbb{S}_{E_\varepsilon})$:
$$
\| T_\varepsilon f \|_{ {\cal B}(\mathbb{S}_{E_\varepsilon})} \le \sup_{(z, k(z)) \in \mathbb{S}_{E_\varepsilon}} |f (z, k(z))|,  \quad f \in {\cal B}(\mathbb{S}_{E_\varepsilon}).
$$
\begin{remark}\label{RT}
Since the point  $(z,k(z)) \in E_\varepsilon$ is uniquely defined by its first coordinate  $z \in \varepsilon \mathbb Z^d$, then we can use  $z \in \varepsilon \mathbb Z^d$ as a coordinate in $E_\varepsilon$ (considering $E_\varepsilon$ as a graph of the mapping $k: \varepsilon \mathbb Z^d \to \{0,1,\ldots, M \}$). In particular, for the transition probabilities of the random walk on $E_\varepsilon$ we keep the same notations $q_\varepsilon(z,w)$ as in \eqref{Peps}.
\end{remark}

\subsection{Limit process}

In this subsection, we construct a limit process, which is a Markov process completely determined by its generator.
We denote $E= \mathbb R^d \times \{0 ,1, \ldots, M\}$, and $ C_0 (E)$ stands for the Banach space of continuous  functions vanishing at infinity.
Together with $E$ we consider $\mathbb{S}_E = E \ \cup \ \{\star\}$ and denote $ C_0 (\mathbb{S}_E) = C_0(E) \oplus \mathbb{C}$. Then  $F \in C_0(\mathbb{S}_E)$ can be represented as
$$
F=(F(z,k), F(\star)), \quad \mbox{where } \; F(z,k) = \{ f_k (z) \in C_0(\mathbb R^d), \; k=0,1,\ldots, M \}, \quad F(\star) \in \mathbb{C},
$$
and the norm in  $ C_0 (\mathbb{S}_E)$ is equal to
$$
\|F\|_{C_0 (\mathbb{S}_E)} = \max \big\{ \|F (z,k) \|_{C_0(E)}, \ F(\star)  \big\},
$$
where
$$
\|F\|_{C_0(E)} = \max_{k = 0, 1 \ldots, M}  \|f_k\|_{C_0(\mathbb R^d)}.
$$

Consider the operator
\begin{equation}\label{LM}
L F(z,k) =
\Big( \Theta \cdot \nabla \nabla f_0 (z) + b \cdot \nabla f_0 (z) \Big) {\bf 1}_{\{k=0\}}
 + \ L_A F(z,k), \qquad L F(\star) =0,
\end{equation}
where ${\bf 1}_{\{k=0\}}$ is the indicator function, $\Theta \cdot \nabla \nabla f_0=\mathrm{Tr}(\Theta \nabla \nabla f_0 )$,  $\Theta$ is  a positive definite  matrix defined below in \eqref{theta},
$b$ is a vector of the effective drift defined also below by \eqref{b}. Both of these effective characteristics of the limiting process are written in terms of the first corrector of the corresponding problem on the cell.
The operator  $L_A$ is a generator of a Markov jump process
\begin{equation}\label{LA}
L_A F(z,k) \ = \  \sum_{{j=0}\atop{j \neq k}}^M \alpha_{kj} (f_j(z) - f_k(z)) \ {+ \ m \big( F(\star) - f_k(z) \big) {\bf 1}_{\{k \neq 0\}} }
\end{equation}
with
\begin{equation}\label{alpha_prior}
\begin{array}{c}
\displaystyle
\alpha_{0j} \ = \ \frac{1}{|B|} \ \sum_{x\in B} \ \ \sum_{y\in \{x_j\}^\sharp}v(x,y),\quad
\alpha_{j0} = \sum_{x \in B^\sharp} v(x_j, x), \quad  j=1,\ldots,M,\\[7mm]
\displaystyle
\alpha_{kj} =  \sum_{y\in \{x_j\}^\sharp}v(x_k, y), \ \ j,\,k=1,\ldots,M,\ j\not=k,
\end{array}
\end{equation}
Notice that the parameters $\alpha_{jk}$, $j,\,k=0,1,\ldots, M$, are non-negative and define intensities of the limit Markov jump process on the period $Y$.



The operator $L$ is defined on the core
\begin{equation}\label{core}
D \ = \ \{ (f_0, f_1, \ldots, f_M), \; f_0 \in C_0^{\infty}(\mathbb R^d), \; f_j \in C_0(\mathbb R^d), \; j=1, \ldots, M\} \oplus \mathbb{C}  \ \subset \ C_0(\mathbb{S}_E),
\end{equation}
and
\begin{equation}\label{coreE}
D_E = \{ (f_0, f_1, \ldots, f_M), \; f_0 \in C_0^{\infty}(\mathbb R^d), \; f_j \in C_0(\mathbb R^d), \; j=1, \ldots, M\}
\end{equation}
is a dense set in $ C_0 (E)$.
One can check that the operator $L$ on $C_0 (\mathbb{S}_E)$ satisfies the positive maximum principle, i.e. if
$
F \in C_0(\mathbb{S}_E)$  and  $\max_{ E\cup \{\star \} } F(z,k) =  F(z_0, k_0),
$
then  $L F (z_0, k_0) \le 0$.
Since $L_A$ is a bounded operator in $C_0(\mathbb{S}_E)$, the operator $\lambda-L$ is invertible for sufficiently large $\lambda$.
Then by the Hille-Yosida theorem the closure of $L$ is a generator of a strongly continuous, positive, contraction semigroup $T(t)$  on $ C_0 (\mathbb{S}_E)$.

{ 
Let us describe the limit process $\mathfrak{X}(t)$ generated by the operator $L$.  It is a two component continuous time Markov process $\mathfrak{X}(t) = \{ {\mathcal{X}}(t), k(t)\}$, where  the first component ${\mathcal{X}}(t)$  lives in $\mathbb R^d \cup \{\star\}$,  the second component is a continuous time jump Markov process $k(t)$ on the state space $K=\{0,1,2, \ldots, M, \star\}$.
The process $k(t)$  does not depend on the other components;  its transition rates $\alpha_{ij}$
are expressed in terms of the transition
probabilities of the original random walk, see \eqref{alpha_prior}.
The probability of jump between any two states $i,j \in \{0,1,2, \ldots, M\}, \  i \neq j $, is equal to $\alpha_{ij}$. The absorbing state $\{ \star \}$ is reachable only from the "astral" states $\{ 1,2,\ldots,M \}$ with the same intencity $m$.
Thus, the matrix corresponding to the generator $L_A$ has the following form
\begin{equation}\label{matrix}
\left(
\begin{array}{cccccc}
- \sum_{j=1}^M \alpha_{0j} & \alpha_{01} &  \alpha_{02} & \ldots & \alpha_{0 M} & 0 \\
 \alpha_{10} & - \sum_{j=0, j\neq 1}^M \alpha_{1j} - m &  \alpha_{12} & \ldots & \alpha_{1 M} & m \\
\vdots & \vdots & \vdots & \ddots & \vdots & \vdots \\
 \alpha_{M 0} &  \alpha_{M 1} & \alpha_{M2} & \ldots & - \sum_{j=0}^{M-1} \alpha_{Mj} - m & m \\
0 & 0 & 0 & \ldots & 0 & 0
\end{array}
\right)
\end{equation}
 When $k(t) = 0$,  the first
component ${\mathcal{X}}(t)$ evolves along the trajectories of a diffusion process in $\mathbb R^d$ with the corresponding effective characteristics, while when $k(t) \neq 0$,  the first
component $\mathcal{X}(t)$ remains still until $k(t)$ takes again the value $0$.
Thus the trajectories of  ${\mathcal{X}}(t)$  coincide with  the trajectories of a diffusion process in $\mathbb R^d$ on those time intervals where  $k(t)=0$.  As long as $ k(t) \neq 0 $ the first component ${\mathcal{X}}(t)$ does
not move, and only the second component $k(t)$ of the process evolves.
Additionally,  the process $k(t)$ can jump from the astral states $\{ 1, \ldots, M \}$ to the absorbing state $\{ \star \}$ with intensity $m$, and upon reaching this state, the process    never  leaves it.
}

\subsection{Main result. The convergence of semigroups}


In this subsection we formulate the main result of this work on  convergence (upscaling) to the limit process constructed in the previous subsection.

Let $l_0^{\infty}(E_\varepsilon)$ be a Banach space of functions on $E_\varepsilon$ vanishing as $|z| \to \infty$ with the norm
\begin{equation}\label{normle}
\|f\|_{l_0^{\infty}(E_\varepsilon)} = \sup_{(z,k(z)) \in E_\varepsilon}|f(z,k(z))| = \sup_{z \in \varepsilon \mathbb Z^d}|f(z,k(z))|,
\end{equation}
and denote $l_0^{\infty}(\mathbb{S}_{E_\varepsilon}) = l_0^{\infty}(E_\varepsilon) \oplus \mathbb{C}$.
For every $F \in C_0(\mathbb{S}_E)$ we define
the function $\pi_\varepsilon F \in l_0^{\infty}(\mathbb{S}_{E_\varepsilon}) $ as follows:
\begin{equation}\label{pi-eps}
(\pi_\varepsilon  F) (z, k(z)) \ = \ \left\{
\begin{array}{ll}
f_0 (z), & \mbox{if} \; z \in \varepsilon B^{\sharp}, \; k(z) =0; \\
f_1 (z), & \mbox{if} \; z \in \varepsilon \{ z_1 \}^{\sharp}, \; k(z) =1;\\
\cdots \\
f_M (z), & \mbox{if} \; z \in \varepsilon \{ z_M \}^{\sharp}, \; k(z) =M,
\end{array}
\right.
\end{equation}
and $ \pi_\varepsilon  F(\star) = F(\star)$.
Then $\pi_\varepsilon$ defines a bounded linear transformation $ \pi_\varepsilon: C_0(\mathbb{S}_E) \to l_0^{\infty}(\mathbb{S}_{E_\varepsilon})$.

\begin{theorem}\label{T1}
Let $T(t)$ be a strongly continuous, positive, contraction semigroup on $ C_0 (\mathbb{S}_E)$ with generator $L$ defined by \eqref{LM}--\eqref{LA}, and $T_\varepsilon$ be the linear operator on $l_0^{\infty}(\mathbb{S}_{E_\varepsilon})$ defined by \eqref{Te}.

Then for every $F \in C_0(\mathbb{S}_E)  $
\begin{equation}\label{M-astral}
T_\varepsilon^{\left[ \frac{t}{\varepsilon^2} \right]} \pi_\varepsilon F \ \to \  T(t) F   \quad \mbox{for all} \quad t \ge 0
\end{equation}
as $\varepsilon \to 0$.
\end{theorem}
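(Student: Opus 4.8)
\medskip

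The statement is a convergence-of-semigroups result, so the natural device is the Trotter--Kurtz approximation theorem for semigroups acting on a varying family of Banach spaces, applied with the bounded linear maps $\pi_\varepsilon\colon C_0(\mathbb{S}_E)\to l_0^\infty(\mathbb{S}_{E_\varepsilon})$ of \eqref{pi-eps} (which satisfy $\|\pi_\varepsilon F\|\to\|F\|$). Since the closure of $L$ is already known, via the positive maximum principle, the Hille--Yosida theorem and the boundedness of $L_A$, to generate the strongly continuous contraction semigroup $T(t)$, and since each $T_\varepsilon$ is a contraction, it suffices to produce, for every $F=(f_0,\dots,f_M,F(\star))$ in the core $D$ of \eqref{core}, a family of corrected functions $F_\varepsilon\in l_0^\infty(\mathbb{S}_{E_\varepsilon})$ such that
\begin{equation}\label{TK}
\|F_\varepsilon-\pi_\varepsilon F\|_{l_0^\infty(\mathbb{S}_{E_\varepsilon})}\longrightarrow 0\qquad\text{and}\qquad \|L_\varepsilon F_\varepsilon-\pi_\varepsilon LF\|_{l_0^\infty(\mathbb{S}_{E_\varepsilon})}\longrightarrow 0
\end{equation}
as $\varepsilon\to0$, where $L_\varepsilon=\varepsilon^{-2}(T_\varepsilon-I)$ is the rescaled generator \eqref{L_e}. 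Then \eqref{M-astral} follows in the sense that $\|T_\varepsilon^{[t/\varepsilon^2]}\pi_\varepsilon F-\pi_\varepsilon T(t)F\|_{l_0^\infty(\mathbb{S}_{E_\varepsilon})}\to0$, uniformly for $t$ in compact intervals.

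The functions $F_\varepsilon$ are built by a discrete two-scale (corrector) expansion, following the scheme of \cite{PZh} and the discrete corrector technique of \cite{Ko86}, but now modified to accommodate the drift matrix $D$ and the killing matrix $W$. On the rapid component, i.e. for $z\in\varepsilon B^\sharp$ (so $k(z)=0$), we set, with $x=z/\varepsilon$,
\begin{equation}\label{ansatz}
F_\varepsilon(z,0)=f_0(z)+\varepsilon\,\chi_1(x)\cdot\nabla f_0(z)+\varepsilon^2\Bigl(\chi_2(x)\cdot\nabla\nabla f_0(z)+\textstyle\sum_{j=1}^M\psi_j(x)\,\bigl(f_j(z)-f_0(z)\bigr)\Bigr),
\end{equation}
while on the astral components we simply take $F_\varepsilon(z,j)=f_j(z)$ for $z\in\varepsilon\{x_j\}^\sharp$, and $F_\varepsilon(\star)=F(\star)$. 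Here $\chi_1$ (a $\mathbb T^d$-periodic vector), $\chi_2$ (a periodic symmetric-matrix-valued function) and $\psi_1,\dots,\psi_M$ (periodic scalars) are correctors, defined as solutions of difference elliptic equations on the period of the form $(I-P^0)\chi=R$ on $B^\sharp$: the right-hand side for $\chi_1$ is the local drift $x\mapsto\sum_y p_0(x,y)(y-x)$ of $P^0$; that for $\psi_j$ is $x\mapsto\sum_{y\in\{x_j\}^\sharp}v(x,y)-\alpha_{0j}$; and that for $\chi_2$ is the $x$-dependent part, after subtraction of its cell average, of the order-$\varepsilon^0$ coefficient produced by $(P^0-I+\varepsilon D)$ acting on the lower-order terms of \eqref{ansatz}. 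Because $P^0$ is symmetric, $I-P^0$ is self-adjoint on $\ell^2(\mathbb T^d\cap B)$ and irreducible by \eqref{irrB}, hence Fredholm with one-dimensional kernel spanned by constants; each of the above equations is therefore solvable precisely because its right-hand side has zero cell average, and this forces the cell averages arising at order $\varepsilon^0$ to be exactly the effective matrix $\Theta$ of \eqref{theta}, the effective drift $b$ of \eqref{b}, and the rates $\alpha_{0j}$ of \eqref{alpha_prior}.

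To verify \eqref{TK}: the first relation is immediate since the correctors are bounded and $\nabla f_0$, $\nabla\nabla f_0$, $f_j-f_0$ belong to $C_0$, so the $\varepsilon$- and $\varepsilon^2$-terms in \eqref{ansatz} tend to $0$ uniformly. For the second, one applies $L_\varepsilon$ to \eqref{ansatz}; since $p_0$, $d$, $v$ have finite range by \eqref{c1}, a Taylor expansion of $f_0$ at $z$ gives a finite expansion in $\varepsilon$: the $\varepsilon^{-1}$-term cancels by the equation for $\chi_1$, the $\varepsilon^0$-term reduces---using the equations for $\chi_2$ and the $\psi_j$ together with the zero-sum conditions \eqref{v0}---to $\bigl(\Theta\cdot\nabla\nabla f_0(z)+b\cdot\nabla f_0(z)\bigr)+\sum_{j=1}^M\alpha_{0j}\bigl(f_j(z)-f_0(z)\bigr)$, which is exactly $\pi_\varepsilon LF$ on $\varepsilon B^\sharp$, and the remaining terms are $O(\varepsilon)$ with a constant controlled by finitely many sup-norms of derivatives of $f_0$ and of the $f_j$. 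On the astral sites the computation is elementary: there $p_0(x_j,x_j)=1$ and $d(x_j,\cdot)=0$, so $L_\varepsilon F_\varepsilon(z,j)=\sum_{y}v(x_j,y)\bigl(F_\varepsilon(\varepsilon y,k(y))-f_j(z)\bigr)+m\bigl(F(\star)-f_j(z)\bigr)+o(1)$, and replacing $F_\varepsilon$ on the neighbouring $B^\sharp$- and $\{x_k\}^\sharp$-sites by $f_0(z)$ and $f_k(z)$ up to $O(\varepsilon)$ and summing $v$ over the relevant periodic sublattices reproduces $L_AF(z,j)$ from \eqref{LA}, hence $LF(z,j)$; on $\{\star\}$ both sides vanish.

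The main obstacle lies in two places. First, one must establish solvability of the cell problems and simultaneously identify the constants emerging at order $\varepsilon^0$: this rests on the symmetry of $P^0$ and the irreducibility \eqref{irrB}, which make the counting measure on $\mathbb T^d\cap B$ invariant and supply the required compatibility (zero cell average) condition; one must also check that the resulting $\Theta$ is positive definite, which again uses the symmetry of $P^0$ via a Dirichlet-form representation of the cell average in which $\chi_1$ enters quadratically. Second, because $\mathbb R^d$ is unbounded while the correctors are only bounded and periodic, the generator estimate in \eqref{TK} must be carried out in the sup-norm uniformly over all of $\varepsilon\mathbb Z^d$; this is precisely why the core $D$ requires $f_0\in C_0^\infty(\mathbb R^d)$---the decay at infinity of all derivatives of $f_0$ (and of the $f_j$) is what upgrades the formally $O(\varepsilon)$ remainder to a genuine $o(1)$. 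The handling of the absorbing state and of the matching of boundary values between the two components contributes only routine terms.
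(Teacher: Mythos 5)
Your proposal is correct and follows essentially the same route as the paper: the Ethier--Kurtz approximation theorem reduces the semigroup convergence to constructing corrected functions $F_\varepsilon$ on the core, the ansatz \eqref{ansatz} coincides (up to sign conventions on the correctors) with the paper's \eqref{4}, the cell problems and their Fredholm solvability via symmetry and irreducibility of $P^0$ on $B^\sharp$ are exactly the content of Propositions \ref{Prop1}--\ref{Prop2}, and the astral-site computation matches \eqref{11}. The only (favorable) difference is that you explicitly let the second corrector absorb the fluctuating part of the drift $b(y)-b$, a point the paper treats only through the averaging condition \eqref{P1_71}.
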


\begin{proof}

The proof of (\ref{M-astral}) relies on the approximation techniques from \cite{EK} used for the proof of  convergence of semigroups.
According to results of \cite[Theorem 6.5, Ch.1]{EK}  the semigroups convergence stated in \eqref{M-astral} is equivalent to the statement  which is the subject of the  next lemma.

\begin{lemma}\label{Fn}
For every $F \in D$, where
$D$ was defined by \eqref{core}, there exists $F_\varepsilon \in  l_0^{\infty}(\mathbb{S}_{E_\varepsilon})$ such that
\begin{equation}\label{F0}
 \| F_\varepsilon - \pi_\varepsilon F\|_{ l_0^{\infty}(\mathbb{S}_{E_\varepsilon})} \to 0
\end{equation}
and
\begin{equation}\label{F1}
\|L_\varepsilon F_\varepsilon - \pi_\varepsilon LF\|_{ l_0^{\infty}(\mathbb{S}_{E_\varepsilon} )} \to 0 \quad \mbox{as } \; \varepsilon \to 0.
\end{equation}
\end{lemma}

\begin{proof}
For every $F = \big((f_0, f_1, \ldots, f_M), F(\star) \big) \in D$ with $(f_0, f_1, \ldots, f_M) \in D_E$ we present a function $F_\varepsilon$ that guarantees the convergence \eqref{F0}-\eqref{F1}.
Let us take $F_\varepsilon \in l_0^{\infty}(\mathbb{S}_{E_\varepsilon})$ in the following form
\begin{equation}\label{4}
F_\varepsilon (z, k(z)) \ = \ \left\{
\begin{array}{ll}
f_0 (z) + \varepsilon (\nabla f_0 (z), h(\frac{z}{\varepsilon})) +  \varepsilon^2 (\nabla \nabla f_0 (z), g(\frac{z}{\varepsilon}))  \\[3mm] +\varepsilon^2  \sum_{j=1}^M q_j (\frac{z}{\varepsilon}) (f_0(z) - f_j (z)),  & \mbox{if} \;  z \in \varepsilon B^{\sharp}, \; k(z) = 0, \\ \\
f_1 (z),  & \mbox{if} \;  z \in \varepsilon \{ x_1 \}^{\sharp}, \; k(z) = 1,\\ \cdots & \cdots \\
f_M (z),  & \mbox{if} \;  z \in \varepsilon \{ x_M \}^{\sharp}, \; k(z) =M,
\end{array}
\right.
\end{equation}
and $F_\varepsilon(\star) = F(\star)$. Here $ h(y),  g(y), q_j (y), j = 1, \ldots, M,$ are periodic bounded functions that will be defined below. The boundedness together with (\ref{4}) immediately imply that
$$
\| F_\varepsilon - \pi_\varepsilon F\|_{ l_0^{\infty}(\mathbb{S}_{E_\varepsilon})} \| =
\sup_{z \in \varepsilon \mathbb Z^d} |  F_\varepsilon (z, k(z)) - \pi_\varepsilon F(z, k(z))  |  \to 0
$$ as $\varepsilon \to 0$. Thus, convergence (\ref{F0}) holds.
\medskip

Let us turn to the second convergence stated in (\ref{F1}). Since $(L_\varepsilon F_\varepsilon)(\star)=(LF)(\star) = 0$, it suffices to show that
\begin{equation}\label{convL}
 \|L_\varepsilon F_\varepsilon - \pi_\varepsilon LF\|_{ l_0^{\infty}(\mathbb{S}_{E_\varepsilon})} = \sup_{z \in \varepsilon \mathbb Z^d } |L_\varepsilon F_\varepsilon (z, k(z)) - \pi_\varepsilon LF (z, k(z))| \to 0.
\end{equation}
In the proof of \eqref{convL}
we use the same arguments as in the paper \cite{PZh}.
According to (\ref{Te}) and (\ref{PVW}) the operator  $L_\varepsilon $ can be written as
$$
L_\varepsilon = \frac{1}{\varepsilon^2} ( T_\varepsilon^0 + \varepsilon D_\varepsilon + \varepsilon^2 V_\varepsilon + \varepsilon^2 W_\varepsilon - I ) = L_\varepsilon^0 + V_\varepsilon + W_\varepsilon,
$$
where
\begin{equation}\label{L0-bis}
L_\varepsilon^0 = \frac{1}{\varepsilon^2} ( T_\varepsilon^0 + \varepsilon D_\varepsilon - I).
\end{equation}
We consider next separately the cases when $z \in \varepsilon B^{\sharp}$, and $z \in \varepsilon A^{\sharp}$.
Since the second component in $E_\varepsilon$ is a function of the first one, in the remaining part of the proof for brevity  write $F_\varepsilon(z)$ instead of $F_\varepsilon(z,k(z))$.

Let $z \in \varepsilon B^{\sharp}$. The first component ($k(z)=0$) of $F_\varepsilon$ in \eqref{4} can be written as a sum
\begin{equation}\label{5}
F_\varepsilon (z) = F_\varepsilon^P(z) + F_\varepsilon^Q(z), \quad  z \in \varepsilon B^{\sharp},
\end{equation}
where
\begin{equation}\label{FP}
F_\varepsilon^P(z) = f_0 (z) + \varepsilon \left(\nabla f_0 (z), h(\frac{z}{\varepsilon})\right) +  \varepsilon^2 \left(\nabla \nabla f_0 (z), g(\frac{z}{\varepsilon})\right),
\end{equation}
\begin{equation}\label{FQ}
 F_\varepsilon^Q (z) = \varepsilon^2 \sum_{j=1}^M q_j (\frac{z}{\varepsilon}) (f_0 (z) - f_j (z)).
\end{equation}
Since $W_\varepsilon F_\varepsilon (z) = 0$, if $z \in  \varepsilon B^{\sharp}$, then
\begin{equation}\label{Ldec}
L_\varepsilon F_\varepsilon = (L_\varepsilon^0 + V_\varepsilon)F_\varepsilon = L_\varepsilon^0 (F_\varepsilon^P  +  F_\varepsilon^Q) + V_\varepsilon F_\varepsilon =  L_\varepsilon^0 F_\varepsilon^P  + L_\varepsilon^0 F_\varepsilon^Q  + V_\varepsilon F_\varepsilon.
\end{equation}
To estimate
\begin{equation}\label{B-estimate}
\sup_{z \in \varepsilon B^{\sharp} } |L_\varepsilon F_\varepsilon (z) - \pi_\varepsilon LF (z)| =
\sup_{z \in \varepsilon B^{\sharp} } | L_\varepsilon^0 F_\varepsilon^P(z)  + L_\varepsilon^0 F_\varepsilon^Q (z)  + V_\varepsilon F_\varepsilon (z) - \pi_\varepsilon LF (z)|
\end{equation}
we will use two following propositions.

\begin{proposition}\label{Prop1}
There exist  bounded periodic functions $h(y)=\{h_i(y)\}_{i=1}^d$ and $g(y)=\{g_{im}(y)\}_{i,m=1}^d$ (correctors) and a positive definite matrix $\Theta>0$, such that
\begin{equation}\label{PP1}
L_\varepsilon^0 F_\varepsilon^P  \ \to \Theta \cdot \nabla \nabla f_0 + b \cdot  \nabla f_0, \quad \mbox{i.e.} \ \sup_{z \in \varepsilon B^{\sharp}}|L_\varepsilon^0 F_\varepsilon^P (z) -  \Theta \cdot \nabla \nabla f_0(z) - b \cdot  \nabla f_0(z)| \to 0  \; \mbox{ as} \; \varepsilon \to 0,
\end{equation}
where $F_\varepsilon^P$ is defined in \eqref{FP}.
\end{proposition}

The proof of this proposition is based on the corrector techniques, it is given in the Appendix.
\bigskip

\begin{proposition}\label{Prop2}
There exist bounded periodic functions $q_j(x)$,  $j = 1, \ldots, M$,  on $B^\sharp$ such that
\begin{equation}\label{7}
\sup_{z \in \varepsilon B^{\sharp}} \left| (L_\varepsilon^0 F_\varepsilon^Q + V_\varepsilon F_\varepsilon)(z) - \sum_{j=1}^M \alpha_{0j} (f_j(z) - f_0(z))  \right| \ \to \ 0  \quad \mbox{as } \; \varepsilon \to 0,
\end{equation}
where  $\alpha_{0j} > 0$ are constants defined in \eqref{alpha_prior}, and $F_\varepsilon^Q$ is introduced in \eqref{FQ}.
\end{proposition}
\noindent
The proof of Proposition \ref{Prop2} is the same as in \cite{PZh}.
We give a proof in the Appendix for complete presentation.
\bigskip

Since
\begin{equation}\label{piL-0}
\pi_\varepsilon LF (z) = \big( \Theta \cdot \nabla \nabla f_0(z) + b \cdot  \nabla f_0(z) \big) {\bf 1}_{\{k=0\}} + (L_A F)(z,0), \quad z \in  \varepsilon B^{\sharp},
\end{equation}
where
$$
(L_A F)(z,0) =  \sum_{j=1}^M \alpha_{0j} (f_j(z) - f_0(z)),
$$
then Propositions \ref{Prop1} - \ref{Prop2} together with \eqref{B-estimate} and (\ref{piL-0}) yield
\begin{equation}\label{B}
\sup_{z \in \varepsilon B^{\sharp} } |L_\varepsilon F_\varepsilon (z) - \pi_\varepsilon LF (z)|   \to 0 \quad \varepsilon \to 0.
\end{equation}



\bigskip

Next we consider the case when $z \in \varepsilon A^{\sharp}$, and prove that
\begin{equation}\label{10}
\sup_{z \in \varepsilon A^{\sharp} } |L_\varepsilon F_\varepsilon (z) - \pi_\varepsilon L F (z)|   \to 0 \quad \varepsilon \to 0.
\end{equation}
Let $z \in \varepsilon \{ x_k \}^{\sharp} \subset \varepsilon A^{\sharp}$. From (\ref{4}), \eqref{L0-bis} and continuity of functions $f_k$ it follows that
$$
(L_\varepsilon F_\varepsilon)(z)= (L_\varepsilon^0 + V_\varepsilon + W_\varepsilon) F_\varepsilon (z) = V_\varepsilon F_\varepsilon (z) + W_\varepsilon F_\varepsilon (z)=
$$
\begin{equation}\label{11}
=\sum_{{j=1}\atop{j \neq k}}^M   \sum_{y \in \{ x_j \}^\sharp} v(x_k, y) (f_j(z) - f_k(z)) +  \sum_{x \in B^\sharp} v(x_k, x) (f_0(z) - f_k(z)) + m(F(*)- f_k(z)) +o(1),
\end{equation}
as $\varepsilon \to 0$. Here we have used the fact that $f_k (z') = f_k(z) + o(1)$ when $|z-z'| \to 0$.
Recall that $x, y \in Y$ are variables on the periodicity cell, and $v(x_k, x_j)$ are the elements of the matrix $V$. On the other hand, according
\eqref{LA} and \eqref{pi-eps} $\pi_\varepsilon L F (z)$ for $z \in \varepsilon \{ x_k \}^{\sharp}$  has the following form
\begin{equation}\label{39A}
\pi_\varepsilon L F (z) = \sum_{{j=0}\atop{j \neq k}}^M \alpha_{kj} (f_j(z) - f_k(z)) + m(F(*)- f_k(z)), \quad k=1, \ldots, M,
\end{equation}
where constants $\alpha_{k0}, \ \alpha_{kj}$ are given by \eqref{alpha_prior}.
Thus,
relations (\ref{11}) and \eqref{39A} imply (\ref{10}).

Finally, (\ref{convL}) is a consequence of (\ref{B}) and (\ref{10}), and Lemma \ref{Fn} is proved.
\end{proof}

It remains to recall that \eqref{M-astral} is a straightforward consequence of the above approximation theorem. This completes the proof of Theorem \ref{T1}.
\end{proof}

\section{Dynamics of pollution. Stationary regime}

In this section we consider an example of the limit dynamics in the case when the astral set $A$ contains one point.
We also derive an equation on the first component $\rho_0(x,t)$ of the astral diffusion describing a visible dynamics of the pollution density.

Denote by
$$
\rho (x,t) = \big( \rho_0(x,t), \rho_1(x,t), \rho_2(t) \big)
$$
the three-component density of pollution, where
$\rho_0(x,t)$ is the density outside of  micro-granules,
$ \rho_1(x,t)$ is the density inside of  micro-granules,
$ \rho_2(t) $ is the density of pollution accumulated (or absorbed)  as a result of cleaning by time $t$.

The conservation principle reads
$$
\int (\rho_0(x,t) + \rho_1(x,t)) dx + \rho_2(t) \equiv const \quad \forall t.
$$

The corresponding model at microscopic scale is an one-point astral model with absorption. Then for the limit dynamics, we obtain the following evolution equations for $\rho(x,t)$
\begin{equation*}\label{1-bis}
\partial_t \rho = L^* \, \rho,
\end{equation*}
or
\begin{equation}\label{P0P1}
\left\{
\begin{array}{l}
\partial_t \rho_0 = \Theta \cdot \nabla \nabla \rho_0 -  b \cdot \nabla \rho_0 - \lambda(0) \rho_0 +  \lambda(1) \rho_1 \\[2mm]
\partial_t \rho_1 = - \big( \lambda(1) + m \big) \rho_1 +  \lambda(0) \rho_0 \\[2mm]
\partial_t \rho_2 = m \int \rho_1(x,t) \, dx.
\end{array}
\right.
\end{equation}
with initial data $\rho(x,0) = \big( \pi_0(x), \pi_1(x), \pi_2 \big)$.
Here $\Theta$, $ b$ are the effective diffusion matrix and the effective drift depending on the geometry of the micro-scale model, $\lambda(0)>0, \ \lambda(1)>0$ are the rates of exchanging between inside and outside regions:
{
$\lambda(0)$ is the intensity of the water flows into cleaning inclusions, while $\lambda(1)$ is the intensity of flows from inclusions.
All of them are the parameters of the limit model. Below in App.1-2 we will show how the effective parameters $\Theta$ and $ b$ can be found from the micro-scale model.
}

The solution of the second equation in \eqref{P0P1} has the form
\begin{equation*}\label{P1}
\rho_1(x,t) = e^{-\lambda_m t} \pi_1(x) + \lambda(0) \int\limits_0^t e^{-\lambda_m (t-s)} \rho_0(x,s) ds,
\end{equation*}
$$
\lambda_m =  \lambda(1) + m.
$$
After substitution of $\rho_1(x,t)$  to the first equation in  \eqref{P0P1} we obtain the following evolution equation on $\rho_0$:
\begin{equation}\label{ro-0}
\partial_t \rho_0 = \Theta \cdot \nabla \nabla \rho_0 - b \cdot \nabla \rho_0 -  \lambda(0) \rho_0 +  \lambda(0) \lambda(1)  \int\limits_0^t e^{-\lambda_m (t-s)} \rho_0(x,s) ds + \lambda(1) e^{- \lambda_m t} \pi_1(x),
\end{equation}
with $\rho_0 (x,0) = \pi_0(x)$.

\medskip

{ Let us consider the stationary problem $L^* \, \rho = 0$ for the macroscopic model in $\Pi = \mathbb{T}^{d-1} \times \mathbb{R}^1_+$.
The equation on $\rho_0(x)$ takes the form
\begin{equation}\label{stationary}
\begin{array}{l} \displaystyle
\Theta \cdot \nabla \nabla \rho_0(x) -  b \cdot \nabla \rho_0(x) -  \lambda(0) \rho_0(x) +
 \frac{\lambda(0) \lambda(1)}{\lambda(1) + m } \ \rho_0(x) =
\\[4mm]  \displaystyle
= \Theta \cdot \nabla \nabla \rho_0(x) -  b \cdot \nabla \rho_0(x) -  \lambda(0) \frac{m}{\lambda(1) + m } \ \rho_0(x) = 0
\end{array}
\end{equation}
with a boundary conditions
\begin{equation}\label{BC}
\rho_0 (x|_{x_d =0}) = \varphi(x),  \quad \rho_0(x|_{x_d = \infty}) = 0,
\end{equation}
where $x_d$ is the direction of the drift. Here $\varphi(x)\ge 0$ is the profile of the initial concentration on the upper cross section.

Assuming 
that the initial profile $\varphi$ is a constant function
one can reduce the dimension in  problem \eqref{stationary}-\eqref{BC}  and obtain  a one-dimensional stationary problem that reads 
\begin{equation}\label{C}
\left\{
\begin{array}{l} \displaystyle
\theta \rho''_0 -  b \rho'_0 -  \varkappa \rho_0 =0, \quad \varkappa =  \lambda(0) \frac{m}{\lambda(1) + m }
\\[4mm]  \displaystyle
\rho_0(0)=1, \quad \rho_0(+\infty) =0.
\end{array}
\right.
\end{equation}
Thus, the rate of the purification process is equal to $ R_{pur}= \frac{1}{2\theta}\big(\sqrt{b^2+4\theta \varkappa} - b\big)$,
and  for sufficiently small $\theta$ we get $R_{pur} \approx \frac{\varkappa}{b}$.
}

\begin{remark}
If the astral set $A$ contains more that one points, i.e. $|A|=M>1$, then the kernel $K(t-s)$ in \eqref{ro-0}
is a linear combination of exponents $e^{-\varkappa_j (t-s)}$ with $\varkappa_j>0, \ j=1, \ldots, M$.
\end{remark}

\section{Appendix 1: proofs of the propositions}

\begin{proof}[Proof of Proposition \ref{Prop2}]

Considering the continuity of functions $f_j$ and the fact that $|w-z|\leq c\eps$,  we have
\begin{equation}\label{8}
(L_\varepsilon^0 F_\varepsilon^Q + V_\varepsilon F_\varepsilon)(z) = \sum_{j=1}^M\Big((T^0_\varepsilon - I) q_j (\frac{z}{\varepsilon})\Big) (f_0(z) - f_j(z)) +  \sum_{j=1}^M  \sum_{w \in \varepsilon \{ x_j \}^\sharp} v_\varepsilon(z, w) (f_j(z) - f_0(z)) + o(1),
\end{equation}
where $o(1)$ tends to 0 as $\varepsilon\to 0$.
From (\ref{7}) and (\ref{8}) we obtain the following system of uncoupled equations on the functions $q_j (\frac{z}{\varepsilon}), \; z \in \varepsilon B^\sharp,$ and constants $\alpha_{0 j}$:
$$
 \Big((T^0_\varepsilon - I) q_j (\frac{z}{\varepsilon})\Big) (f_0(z) - f_j(z)) +  \sum_{w \in \varepsilon \{ x_j \}^\sharp} v_\varepsilon(z, w) (f_j(z) - f_0(z)) = \alpha_{0j}  (f_j(z) - f_0(z)), \quad j = 1, \ldots, M.
$$
Then for every $j = 1, \ldots, M$, the function $q_j (\frac{z}{\varepsilon})$ satisfies the equation
\begin{equation}\label{9}
(T^0_\varepsilon - I) q_j (\frac{z}{\varepsilon}) =    \sum_{w \in \varepsilon \{ x_j \}^\sharp} v_\varepsilon(z, w)  - \alpha_{0j} {\bf 1_B}, \quad z \in \varepsilon B^{\sharp},
\end{equation}
which is of equivalent the following equation on  $ B^\sharp$:
\begin{equation}\label{P2_4}
(P^0 - I) q_j (x) =  \sum_{y \in  \{ x_j \}^\sharp}  v (x, y) - \alpha_{0j} {\bf 1_B}, \quad  x \in B^\sharp,
\end{equation}
where ${\bf 1_B}(x) \equiv 1 \; \forall x \in B^\sharp$, and $q_j(x)$ is $Y$-periodic.
Using Fredholm' alternative we conclude that the equation (\ref{P2_4}) has a unique solution if
$$
 \sum_{y \in  \{ x_j \}^\sharp}  v (x, y)  - \alpha_{0j} {\bf 1_B} \ \bot \mbox{ Ker } (P^0 - I)^\ast =  {\bf 1_B}.
$$
(The last relation follows from the irreducibility of $P^0$ on $B^\sharp$).
This condition implies the unique choice of constants $\alpha_{0j}$ given by formula \eqref{alpha_prior}, such that
the equation (\ref{P2_4}) has a unique bounded periodic solution $q_j(x), \ x \in B^\sharp$.
Proposition \ref{Prop2} is proved.
\end{proof}
\bigskip

\begin{proof}[Proof of Proposition \ref{Prop1}]

Using (\ref{FP}) we get for all  $x \in \varepsilon B^{\sharp}$:
\begin{equation}\label{P1_1}
\begin{array}{l} \displaystyle
L_\varepsilon^0 F_\varepsilon^P (x) =  \frac{1}{\varepsilon^2} (T_\varepsilon^0 - I) \left( f_0 (x) + \varepsilon \left(\nabla f_0 (x), h(\frac{x}{\varepsilon}) \right) \right) +  (T_\varepsilon^0 - I) \left(\nabla \nabla f_0 (x), g(\frac{x}{\varepsilon})\right)\\[4mm] \displaystyle
+  \frac{1}{\varepsilon} D_\varepsilon \left( f_0 (x) + \varepsilon (\nabla f_0 (x), h(\frac{x}{\varepsilon}) ) \right) +  O(\varepsilon).
\end{array}
\end{equation}
Then the vector function $h(\frac{x}{\varepsilon})$ can be found from the relation
\begin{equation}\label{P1_2}
\frac{1}{\varepsilon^2} (T_\varepsilon^0 - I) \left( f_0 (x) + \varepsilon \left(\nabla f_0 (x), h(\frac{x}{\varepsilon}) \right) \right) = O(1).
\end{equation}
Using notation \eqref{p0-sym-bis} we can write $T^0_\varepsilon f (x)$  as follows:
\begin{equation}\label{Pxi1}
(T^0_\varepsilon f)(x) \ = \ \sum_{\xi} p_\xi (\frac{x}{\varepsilon}) f(x+ \varepsilon \xi), \quad x \in \varepsilon B^{\sharp}.
\end{equation}
It follows from (\ref{Pxi1}) that the left-hand side of (\ref{P1_2}) takes the form:
\begin{equation}\label{P1_3}
\frac{1}{\varepsilon^2} \sum_{\xi} p_\xi (\frac{x}{\varepsilon}) \left( f_0 (x + \varepsilon \xi) - f_0(x) \right)  +
\frac{1}{\varepsilon} \sum_{\xi} p_\xi (\frac{x}{\varepsilon})
\left( \left(\nabla f_0 (x + \varepsilon \xi), h(\frac{x}{\varepsilon}+\xi) \right) - \left(\nabla f_0 (x), h(\frac{x}{\varepsilon}) \right)  \right)
\end{equation}
$$
= \frac{1}{\varepsilon} \sum_{\xi} p_\xi (\frac{x}{\varepsilon}) \left( \nabla f_0 (x), \xi \right)  +
\frac{1}{\varepsilon} \sum_{\xi } p_\xi (\frac{x}{\varepsilon})
\left( \nabla f_0 (x), h(\frac{x}{\varepsilon}+\xi) - h(\frac{x}{\varepsilon}) \right)  + O(1)
$$
$$
= \frac{1}{\varepsilon}   \left( \nabla f_0 (x), \sum_{\xi} p_\xi (\frac{x}{\varepsilon}) \left(  \xi + ( h(\frac{x}{\varepsilon}+\xi) - h(\frac{x}{\varepsilon}) \right) \right)  + O(1).
$$
Thus the periodic vector function $h(x)$ is taken as a solution of the equation
\begin{equation}\label{P1_4}
(P^0 - I) \left( l (x) + h(x) \right) =0, \quad x \in B^\sharp,
\end{equation}
 where $l(x)=x$ is the linear function. The solvability condition for equation (\ref{P1_4}) reads
$$
((P^0 - I)l,  \mbox{ Ker } (P^0 - I)^\ast) = ((P^0 - I) l ,\ {\bf 1_B}) = \sum_{x \in B} \sum_{\xi} p_\xi(x) \xi = 0.
$$
Since $p_{\xi}(x) = p_{-\xi}(x+\xi)$, this condition holds true, which implies the existence of the unique, up to an additive constant,  periodic solution $h(x)$ of equation (\ref{P1_4}).

\medskip

We follow the similar reasoning to find an equation for the periodic matrix function $g(x), \ x \in B^\sharp$. We will also obtain below expressions for effective matrix $\Theta$ and drift $b$.

Collecting in (\ref{P1_1}) all terms of the order $O(1)$, using  relation (\ref{P1_4}) on the  function $h(x)$ and relation \eqref{v0} on matrix $D$ we get:
$$
\frac{1}{\varepsilon^2} \sum_{\xi} p_\xi (\frac{x}{\varepsilon}) \left( f_0 (x+ \varepsilon \xi) - f_0(x) \right)  +
\frac{1}{\varepsilon} \sum_{\xi } p_\xi (\frac{x}{\varepsilon})
\left( \left(\nabla f_0 (x+\varepsilon \xi), h(\frac{x}{\varepsilon}+\xi) \right) - \left(\nabla f_0 (x), h(\frac{x}{\varepsilon}) \right)  \right)
$$
$$
+ \sum_{\xi } p_\xi (\frac{x}{\varepsilon}) \left( \left( \nabla \nabla f_0 (x + \varepsilon \xi), g(\frac{x}{\varepsilon} + \xi) \right) -
 \left( \nabla \nabla f_0 (x), g(\frac{x}{\varepsilon} ) \right) \right)
$$
$$
+ \frac{1}{\varepsilon}
 \sum_{\xi} d_\xi (\frac{x}{\varepsilon}) f_0 (x+ \varepsilon \xi) +   \sum_{\xi} d_\xi (\frac{x}{\varepsilon})  \left(\nabla f_0 (x+ \varepsilon \xi), h(\frac{x}{\varepsilon}+ \xi) \right)  + O(\varepsilon)
$$
$$
= \frac{1}{\varepsilon}   \left( \nabla f_0 (x), \sum_{\xi} p_\xi (\frac{x}{\varepsilon}) \left(  \xi + ( h(\frac{x}{\varepsilon}+\xi) - h(\frac{x}{\varepsilon}) \right) \right)
$$
$$
+ \left( \nabla \nabla f_0 (x), \sum_{\xi} p_\xi (\frac{x}{\varepsilon}) \left(  \frac12 \xi \otimes \xi + \xi \otimes  h(\frac{x}{\varepsilon}+\xi)  + (g(\frac{x}{\varepsilon} + \xi) - g(\frac{x}{\varepsilon} ) ) \right) \right)
$$
$$
+  \left( \nabla f_0 (x), \sum_{\xi} d_\xi (\frac{x}{\varepsilon}) (\xi +  h(\frac{x}{\varepsilon}+ \xi)) \right)
+ O(\varepsilon)
$$
\begin{equation}\label{P1_5}
= \left( \nabla \nabla f_0 (z), \sum_{\xi \in \Lambda_{\frac{z}{\varepsilon}}} p_\xi (\frac{z}{\varepsilon}) \left(  \frac12 \xi \otimes \xi + \xi \otimes  h(\frac{z}{\varepsilon}+\xi) \right)  + (P^0 -I) g(\frac{z}{\varepsilon})   \right)
\end{equation}
$$
+ \left( \nabla f_0 (x), \sum_{\xi} d_\xi (\frac{x}{\varepsilon}) (\xi +  h(\frac{x}{\varepsilon}+ \xi)) \right) + O(\varepsilon).
$$
Let $\frac{x}{\varepsilon} = y \in B$, and denote by $\Phi(h)$ the following matrix and vector functions
\begin{equation}\label{P1_6}
\Phi(h)(y) =   \frac12 \sum_{\xi \in \Lambda_y} p_\xi (y)\, \xi \otimes \xi + \sum_{\xi \in \Lambda_y} p_\xi (y)\, \xi \otimes h(y+ \xi),
\end{equation}
\begin{equation}\label{b-drift}
b(y) = \sum_{\xi} d_\xi (y) (\xi +  h(y + \xi)) =  \sum_{\xi} d (y, y+ \xi) (\xi +  h(y + \xi)),  \quad y \in B.
\end{equation}
In order to ensure the convergence in (\ref{PP1}) we should find a constant matrix $\Theta$, a periodic matrix
function $g(y)$ and a constant vector $b $ such that
\begin{equation}\label{P1_7}
\Phi(h)_{km}(y) + (P^0 - I) g_{km}(y) = \Theta_{km},
\end{equation}
\begin{equation}\label{P1_71}
( b(y) - b, {\bf 1_B}) = \sum_{y \in B} \sum_{\xi} d (y, y+\xi) (\xi +  h(y + \xi)) - b \,|B| = 0.
\end{equation}
The latter equation implies that
\begin{equation}\label{b}
b = \frac{1}{|B|} \sum_{y \in B} \sum_{\xi } d(y, y + \xi)( \xi  +  h(y + \xi)).
\end{equation}
The solvability condition for (\ref{P1_7}) reads
$$
(-\Phi(h)_{km} + \Theta_{km}, \ Ker (P^0 - I)^\ast) = (-\Phi(h)_{km} + \Theta_{km} ,\ {\bf 1_B})  = 0.
$$
Thus $\Theta$ is uniquely defined as follows:
\begin{equation}\label{theta}
\Theta_{k m} = \frac{1}{|B|} \sum_{y \in B} \Phi_{k m}(h) (y), \quad \mbox{ where } \quad  \Phi(h)(y) = \sum_{\xi} p_\xi (y)\, \xi \otimes \left( \frac12  \xi  +  h(y + \xi) \right),
\end{equation}
and $g(y)$ is a solution of equation \eqref{P1_7}. This solution is uniquely defined up to a constant matrix.
We notice that the matrix $\Theta$ defined by \eqref{theta} is  positive definite, i.e. $(\Theta \eta, \eta)>0 \; \forall \eta \neq 0$. The proof is given in \cite{PZh}.

This complete the proof of Proposition \ref{Prop1}.
\end{proof}

\section{Appendix 2. One example with calculation of effective parameters.}

In this section we consider one example of a model at the microscopic scale and calculate for this example the effective parameters $\Theta$ and $ b$ of the limit model that in particular used in the description of the stationary regime, see equation \eqref{stationary}.

Let the periodicity cell $Y$ be a square $3 \times 3$ of the two-dimensional lattice $\mathbb Z^2$, $A$ is the one-point subset of $Y$ located at the center of $Y$, and $B = Y \backslash A$, t.e. $|B| = 8$. Let us renumber the elements from $B$ in accordance with their position on the cell $Y$:
$$
\boxed{
\begin{array}{ccc}
s_1& s_2& s_3\\
s_4 & \bullet &  s_5\\
s_6 & s_7 & s_8
\end{array}
}
$$
We define the symmetric matrix $P_0|_{B^\sharp} = \{ p_0(x,y), \ x,y \in B^\sharp \}$ describing the free moving of the water outside of cleaning elements as follows:
\begin{itemize}
\item[\bf--] $ \ p_0(x,x \pm e_1) = p_0(x,x \pm e_2) = \frac14$, if $x \in \{s_1, s_3, s_6, s_8\}$;
\item[\bf--] $ \ p_0(x, x \pm e_1) = \frac14, \; p_0(x,x + e_2) = \frac12$, if $x = \{s_2\}$;
\item[\bf--] $ \ p_0(x, x \pm e_2) = \frac14, \; p_0(x,x - e_1) = \frac12$, if $x = \{s_4\}$;
\item[\bf--]  $ \ p_0(x, x \pm e_2) = \frac14, \; p_0(x,x + e_1) = \frac12$, if $x = \{s_5\}$;
\item[\bf--]  $ \ p_0(x, x \pm e_1) = \frac14, \; p_0(x,x - e_2) = \frac12$, if $x = \{s_7\}$.
\end{itemize}
Other elements of the matrix $P_0|_{B^\sharp}$ equal to 0.

Next the matrix $D$ is introduced, which determines a nonzero drift (of the order $\varepsilon$) in the microscopic scale model.
Taking into account relation \eqref{v0} we consider the following elements of $D = \{ d(x,y) \}, \ x,y \in B^\sharp$:
\begin{equation}\label{Ditem}
d(x,x \pm e_2) =  \mp K, \quad \mbox{if } \; x \notin \{s_2, s_7\}, \qquad \mbox{ and } \; d(x,y) =0, \quad \mbox{ otherwise}.
\end{equation}
Thus, the model at the microscopic scale on the component $B^\sharp$ complementary to the astral component is completely defined.

As follows from results of the previous section the effective parameters, the matrix $\Theta$ and the vector $ b$,  are given by \eqref{theta} and \eqref{b-drift} respectively. Since
$$
p_\xi(x) = p_0(x,x+\xi), \quad d_\xi(x) = d(x,x+\xi), \qquad x,\ x+\xi \in B^\sharp,
$$
have been already introduced above in this section, we have to find the vector function $h(x) = \{ h(x),\ x \in B \}$. This function is called corrector, and it is taken as a solution of the equation \eqref{P1_4}. Thus, in our example the corrector is the same as a set of eight vectors
$$
h^B = \{ h(s_1) \in \mathbb{Z}^2, \  h(s_2) \in \mathbb{Z}^2, \ldots, \ h(s_8) \in \mathbb{Z}^2 \}, \quad B=\{s_1, \ldots, s_8\}.
$$

We explain now how to find this vector function.
It is worth noticing that \eqref{P1_4} is a system of uncoupled equations, and we can solve it for each coordinate separately. To find the vector of the first coordinates $h^B_1 = \{h_1(s_1), h_1(s_2), \ldots, h_1(s_8) \}$ we take in \eqref{P1_4} $l_1(x) = (x_1, 0)$ and rewrite \eqref{P1_4} in the following way:
\begin{equation}\label{P1_4bis}
(P^0 - I) h_1(x) = - (P^0 - I)  l_1(x) =: g_1(x), \quad x \in B^\sharp,
\end{equation}
where $g_1(x) =   - (P^0 - I)  l_1(x) = (0,0,0,1/2, -1/2,0,0,0)$ is the function defined on $B$ and periodic on $\mathbb{Z}^2$. It can be represented as follows:
$$
\boxed{
\begin{array}{ccc}
0& 0& 0\\[1.5mm]
1/2 & \bullet & \!\! -1/2\\[1.5mm]
0 & 0 & 0
\end{array}
}
$$
Thus, it follows from \eqref{P1_4bis} that the vector of the first coordinates $h^B_1$ is equal to
\begin{equation}\label{hB-1}
h^B_1 = (P^0 - I)^{-1} g_1.
\end{equation}
Similarly, setting $l_2(x) = (0, x_2)$, we obtain the vector of the second coordinates $h^B_2$ of the set $h^B$
by the formula
\begin{equation}\label{hB-2}
h^B_2 = (P^0 - I)^{-1} g_2, \qquad \mbox{where } \; g_2 = -(P^0 - I)l_2.
\end{equation}
Then using \eqref{hB-1} - \eqref{hB-2} one can find the vector $b$ by the formula \eqref{b-drift}. In this example we get
$$
b =  \sum_{x \in B}\sum_{\xi= \pm e_1, \pm e_2} d (x, x+ \xi) (\xi +  h(x + \xi)) = (0, - \frac32 K),
$$
where $K$ is the same constant as in \eqref{Ditem}. Here we used that by the periodicity assumption for any $x \in B$ and any $\xi \in \mathbb{Z}^2$
$$
(x + \xi)_{{\rm mod} B} = \hat x, \quad \mbox{with some } \; \hat x \in B,
$$
so that $h(x + \xi) = h(\hat x)$.

Finally the matrix $\Theta$ of order $2 \times 2$ can be found by the formula \eqref{theta}, where each term in the sum \eqref{theta}:
$$
\Phi(h)(x) = \sum_{\xi = \pm e_1, \pm e_2} p_\xi (x)\, \xi \otimes \left( \frac12  \xi  +  h(x + \xi) \right), \quad x \in B,
$$
is determined in terms of the corrector $h(x)$ and the matrix elements of $P_0|_{B^\sharp}$. We used here that
$$
\left(
\begin{array}{c}
a_1 \\ a_2
\end{array}
 \right) \otimes \left(
 \begin{array}{c}
b_1 \\ b_2
\end{array} \right) =
\left(
\begin{array}{cc}
a_1 b_1 & a_1 b_2 \\ a_2 b_1 & a_2 b_2
\end{array}
 \right)
$$

\end{document}